\documentclass[11pt, english]{article}
\usepackage[margin=1in,bottom=1in,top=1in]{geometry}

\usepackage[dvipsnames]{xcolor}
\usepackage[square,sort,comma,numbers]{natbib}
\usepackage{amsthm}
\usepackage{amsmath}
\usepackage{amssymb}
\usepackage{setspace}
\usepackage{mathtools}
\usepackage{graphicx}
\graphicspath{ {./images/} }
\usepackage[hidelinks]{hyperref}
\usepackage{cleveref}
\usepackage{bbold}
\usepackage{hyperref}
\usepackage[shortlabels]{enumitem}
\usepackage{framed}
\usepackage{subcaption}
\usepackage{xspace}
\usepackage{thmtools} 
\usepackage{thm-restate}
\usepackage[many]{tcolorbox}
\usepackage{tikz-feynman}
\setlength{\bibsep}{1 pt plus 20 ex}

\usepackage[linesnumbered,boxed, vlined]{algorithm2e}
	\DontPrintSemicolon
  \SetKwInOut{Input}{Input}
  \SetKwInOut{Output}{Output}
  \SetKwInOut{Assumption}{Assumption}

  \SetKwProg{function}{Function}{:}{}

\makeatletter
\newcommand*{\algotitle}[2]{%
  \stepcounter{algocf}%
  \hypertarget{algocf.title.\theHalgocf}{}%
  \NR@gettitle{#1}%
  \label{#2}%
  \addtocounter{algocf}{-1}%
}
\makeatother

\usepackage{thmtools}
\usepackage{ifthen} 
\usepackage{floatrow}

\usepackage{tikz}
\usepackage{mathdots}
\usepackage{xcolor}
\usepackage{diagbox}
\usepackage{colortbl}
\usepackage[absolute,overlay]{textpos}

\usetikzlibrary{shapes.misc}
\usetikzlibrary{decorations.pathmorphing}

\tikzset{snake it/.style={decorate, decoration=snake}}
\usetikzlibrary{math}

\usetikzlibrary{calc}
\usetikzlibrary{decorations.pathreplacing}
\usetikzlibrary{positioning,patterns}
\usetikzlibrary{arrows,shapes,positioning}
\usetikzlibrary{decorations.markings}

\tikzstyle{edge}=[very thick]
\definecolor{bostonuniversityred}{rgb}{0.8, 0.0, 0.0}
\definecolor{arsenic}{rgb}{0.23, 0.27, 0.29}
\tikzstyle{diredge}=[postaction={decorate,decoration={markings,
		mark=at position .95 with {\arrow[scale = 1]{stealth};}}}]
\tikzset{
    arrow/.style={decoration={markings, mark=at position 0.7 with
    {\fill(-0.09*#1,-0.03*#1) -- (0,0) -- (-0.09*#1,0.03*#1) -- cycle;}}, postaction={decorate}},
    arrow/.default=1
}
\tikzset{
    arow/.style={decoration={markings, mark=at position 1 with
    {\fill(-0.09*#1,-0.03*#1) -- (0,0) -- (-0.09*#1,0.03*#1) -- cycle;}}, postaction={decorate}},
    arow/.default=1
}
\tikzset{
    arrrow/.style={decoration={markings, mark=at position 0.9 with
    {\fill(-0.09*#1,-0.03*#1) -- (0,0) -- (-0.09*#1,0.03*#1) -- cycle;}}, postaction={decorate}},
    arow/.default=1
}

\newcommand{\fitellipsis}[2] % first and second node names without parentheses
{\draw [fill=white]let \p1=(#1), \p2=(#2), \n1={atan2(\y2-\y1,\x2-\x1)}, \n2={veclen(\y2-\y1,\x2-\x1)}
    in ($ (\p1)!0.5!(\p2) $) ellipse [ x radius=\n2/2+0cm, y radius=1.1cm, rotate=\n1];
}
\newcommand{\Fitellipsis}[2] % first and second node names without parentheses
{\draw [fill=white]let \p1=(#1), \p2=(#2), \n1={atan2(\y2-\y1,\x2-\x1)}, \n2={veclen(\y2-\y1,\x2-\x1)}
    in ($ (\p1)!0.5!(\p2) $) ellipse [ x radius=\n2/2+0cm, y radius=1.4cm, rotate=\n1];
}

\floatsetup{ 
  heightadjust=object,
  valign=c
}

\setlength{\parskip}{0pt}
\setlength{\parindent}{15pt}

\addtolength{\intextsep}{6pt} %space between text and figures
\addtolength{\abovecaptionskip}{10pt}
\addtolength{\belowcaptionskip}{-5pt}
\captionsetup{width=0.8\textwidth, labelfont=bf, parskip=5pt}

\setstretch{1.05}

\theoremstyle{plain}

\newtheorem*{thm*}{Theorem}
\newtheorem{thm}{Theorem}[section]
\Crefname{thm}{Theorem}{Theorems}

\newtheorem*{lem*}{Lemma}
\newtheorem{lem}[thm]{Lemma}
\Crefname{lem}{Lemma}{Lemmas}

\newtheorem*{assumption*}{Assumption}
% \tcolorboxenvironment{assumption*}{
%   colback=blue!5!white,
%   boxrule=0pt,
%   boxsep=1pt,
%   left=2pt,right=2pt,top=2pt,bottom=2pt,
%   oversize=2pt,
%   sharp corners,
%   before skip=\topsep,
%   after skip=\topsep,
% }

% \newtheorem*{claim}{Claim}
\newtheorem{claim}{Claim}

% \newtheorem{claim}[thm]{Claim}
% \crefname{claim}{Claim}{Claims}
% \Crefname{claim}{Claim}{Claims}

\Crefname{prop}{Proposition}{Propositions}

\Crefname{remar}{Remark}{Remarks}

\crefname{cor}{Corollary}{Corollaries}

\newtheorem*{conj*}{Conjecture}

\crefname{conj}{Conjecture}{Conjectures}

\Crefname{qn}{Question}{Questions}

\Crefname{obs}{Observation}{Observations}

\Crefname{ex}{Example}{Examples}

\theoremstyle{definition}

\Crefname{prob}{Problem}{Problems}

\newtheorem{defn}[thm]{Definition}
\Crefname{defn}{Definition}{Definitions}

\theoremstyle{remark}

\captionsetup{width=0.8\textwidth, labelfont=bf, parskip=5pt}

 % overline short
\makeatletter

\renewenvironment{proof}[1][]{\begin{trivlist}
\item[\hspace{\labelsep}{\bf\noindent Proof#1.\/}] }{\qed\end{trivlist}}

\newcommand{\remove}[1]{}

\newcommand{\Sat}{\mathrm{Sat}}
\newcommand{\Low}{\mathrm{Low}}

\newcommand{\din}{\mathrm{in}}
\newcommand{\dout}{\mathrm{out}}

\title{Edge-disjoint paths in expanders: online with removals}
\date{}

\author{
Nemanja Dragani\'c\thanks{
Department of Mathematics, ETH, Z\"urich, Switzerland. Research supported in part by SNSF grant 200021\_196965.
\emph{email}: \textbf{nemanja.draganic@math.ethz.ch}.
}
\and
Rajko Nenadov\thanks{School of Computer Science, University of Auckland, New Zealand. \emph{email}: \textbf{rajko.nenadov@auckland.ac.nz}.}
}
\begin{document} 
\maketitle
\begin{abstract}
We consider the problem of finding edge-disjoint paths between given pairs of vertices in a sufficiently strong $d$-regular expander graph $G$ with $n$ vertices. In particular, we describe a deterministic, polynomial time algorithm which maintains an initially empty collection of edge-disjoint paths $\mathcal P$ in $G$ and fulfills any series of two types of requests: 
\begin{enumerate}
    \item Given two vertices $a$ and $b$ such that each appears as an endpoint in $O(d)$ paths in $\mathcal P$ and, additionally, $|\mathcal P|  = O(n d / \log n)$, the algorithm finds a path of length at most $\log n$ connecting $a$ and $b$ which is edge-disjoint from all other paths in $\mathcal P$, and adds it to $\mathcal P$.
    \item Remove a given path $P \in \mathcal{P}$ from $\mathcal{P}$.
\end{enumerate}
Importantly, each request is processed before seeing the next one. The upper bound on the length of found paths and the constraints are the best possible up to a constant factor. This establishes the first online algorithm for finding edge-disjoint paths in expanders which also allows removals, significantly strengthening a long list of previous results on the topic.
\end{abstract}

\section{Introduction}
Finding a collection of pairwise edge-disjoint paths which connect prescribed pairs of vertices $(a_i, b_i)_{i \in [r]}$ in a graph $G$ is a classical and extensively studied problem in computer science. It is an NP-complete problem which becomes tractable when $r$ is fixed \cite{robertson95disjointpaths}. In the case of directed graphs, the problem remains NP-complete even for $r = 2$ \cite{fortune80directed}. 

In this paper we focus on the extensively studied instance of this problem when $G$ is a sufficiently strong $d$-regular \emph{expander} on $n$ vertices. Peleg and Upfal~\cite{peleg1987constructing} showed that the problem on such graphs becomes tractable for significantly larger values of $r$, under an assumption that all the given pairs $(a_i, b_i)$ are pairwise disjoint. In particular, they showed that any set of at most $r = O(n^{c})$ disjoint pairs of vertices can be connected by edge-disjoint paths, for some $c < 1/2$ which depends on the expansion properties of $G$, and such paths can be found in polynomial time. Let us briefly discuss why this is not surprising and establish a (theoretical) upper bound of $r$ that can be attained.

\paragraph{Routing in expanders.} Without going into detail of the definition of $d$-regular expanders (Definition \ref{def:expander}), let us note that one of their main features is that the diameter is $O(\log n)$ even if we remove, say, up to $d/3$ edges touching each vertex. Therefore, if we can `nicely' distribute paths (i.e.\ no vertex belongs to too many) between $(a_1, b_1), \ldots, (a_{r-1}, b_{r-1})$, then this observation immediately tells us that we can connect $a_r$ and $b_r$ using a path of length $O(\log n)$. As there exist expanders in which most pairs of vertices are actually at distance $\Theta(\log n)$, this is clearly a length one cannot avoid if the pairs are chosen adversarially. Moreover, there also exist expanders in which the second shortest path between \emph{every} two vertices is of length $\Omega(\log n)$ (i.e.\ expanders with high girth), which provides further evidence that shorter paths, even if they exist, are not feasible. Therefore, if we restrict to sets of disjoint pairs, the largest $r$ one can hope for is of order $n / \log n$. In a more general case where we do not impose such a constraint, as $G$ has $nd/2$ edges, the largest $r$ is (or could be) of order $nd / \log n$.

\paragraph{Previous results.} The result of Peleg and Upfal falls short of the theoretically best possible bound on $r$, and it was an important open problem to determine whether such a bound is (algorithmically) attainable. Starting with Broder, Frieze and Upfal \cite{broder94existence}, who improved $r$ to $\Theta(n / \log^C n)$, for some $C \ge 7$ (which, again, depends on the expansion properties), this problem has attracted significant attention. The bound on $r$ was further improved by Leighton and Rao \cite{leighton1996circuit} (see \cite[Section 3.18]{leighton99multiflow}), Broder, Frieze and Upfal \cite{broder97existense}, and Leighton, Lu, Rao, and Srinivasan \cite{leighton1998multicommodity,leighton2001new}. Finally, Frieze~\cite{frieze01final} gave a randomised polynomial time algorithm which connects any pair of $\Theta(nd / \log n)$ pairs of vertices, as long as each vertex appears as an endpoint at most $\varepsilon d$ times, for some constant $\varepsilon > 0$. In the case of directed expanders, the same result was established later by Bohman and Frieze \cite{bohman03arc}. We note that in the case where $G$ is a random graph with average degree $d$, an even better bound on $r$ of order $nd / \log_d n$ was obtained by Broder, Frieze, Suen, and Upfal \cite{broder98random} and Frieze and Zhao \cite{frieze99randomreg} (note that the diameter of a random graph can be somewhat smaller than the diameter of an expander graph, which results in slightly larger $r$). As the latest result on this topic, Alon and Capalbo~\cite{alon2007finding} have further improved the undirected case by designing a deterministic polynomial time algorithm which not only can deal with the optimal number of requested pairs, but their algorithm is also \emph{online} in the sense that the pairs are given one by one, and the algorithm has to find a path between the current pair of vertices before seeing the next one (once a path is established, it cannot be altered).

A related problem of finding vertex-disjoint paths in random graphs was considered by Shamir and Upfal \cite{shamir85disjoint}, Hochbaum \cite{Hochbaum92}, Broder, Frieze, Suen, and Upfal \cite{broder96vertexdisjoint} and, quite recently, Dragani\'c, Krivelevich, and Nenadov \cite{draganic2022rolling}.

\paragraph{Our contribution.} We generalise all these results by presenting an online deterministic algorithm with removals: The client can request a new pair of vertices to be connected, or a previously established path to be removed. The client can continue with requests indefinitely as long as the total number of (active) paths is at most $O(n d/\log n)$ in case of expanders, that is, at most $O(n d / \log_d n)$ in the case of the so-called \emph{$(n, d, \lambda)$-graphs} with $\lambda < d^{1 - \varepsilon}$ (which covers the case of random $d$-regular graphs). In addtion, we always guarantee that the length of the found paths is at most of order $\log n$, which resolves a question of Alon and Capalbo~\cite{alon2007finding}. Our approach significantly deviates from the methods used in the aforementioned work. It is conceptually simple and exploits the previously discussed, fundamental reason why finding many edge-disjoint paths is possible -- even if we only have a subset of edges at our disposal, the diameter is still $O(\log n)$.

The paper is organised as follows. In the next section we formally state our results. Section \ref{sec:oracle} describes a data structure that we call \textsc{Edge-Oracle}, which is the heart of the algorithm. Instead of working with all available edges to find a path between two given vertices, this data structure carefully chooses, in an online fashion, a subset of edges which makes sure that the found paths are well distributed. Finally, in Section \ref{sec:main_result} we describe the algorithm for establishing new paths and prove its correctness. The algorithm is largely based on Breadth-First Search, with the edges used to explore the graph obtained via \textsc{Edge-Oracle}.

\subsection{Online routing with removals}

We say that a directed graph (\emph{digraph} for short) is \emph{$d$-regular} if the in- and out-degree of every vertex is $d$. Given a $d$-regular graph $G$ (directed or undirected) and $r \in \mathbb{N}$, we define the \emph{$r$-Routing game} on $G$ played by two players, \emph{Adversary} and \emph{Router}, as follows. Throughout the game, Router maintains a family of pairwise edge-disjoint paths $\mathcal P$, which is initially empty. In each turn, Adversary has two types of requests:
\begin{itemize}
    \item \textsc{Find-Path $a$ $b$}: Given vertices $a$ and $b$ in $V(G)$ such that both $a$ and $b$ appear as endpoints of less than $d/200$ paths in $\mathcal{P}$ and, additionally, $|\mathcal P| < r$, Router is required to find a path from $a$ to $b$ (this path has to be directed from $a$ to $b$ if $G$ is a directed graph) which is edge-disjoint from all the other paths in $\mathcal P$. The found path is then added to $\mathcal{P}$. If such a path does not exist, Router loses the game.
    
    \item \textsc{Remove-Path $P$}: A path $P \in \mathcal P$ is removed from $\mathcal{P}$.
\end{itemize}
It is important to note that each request has to be fulfilled by Router before seeing the next one. We say that Router wins the $r$-Routing game if she can satisfy any (infinite) series of requests. 

% analogous to the one by Frieze \cite{frieze01final}.

\begin{defn}[Expander graphs] \label{def:expander}
We say that a $n$-vertex $d$-regular graph $G$ is an \emph{$(\beta, \gamma)$-expander}, for some $\beta, \gamma > 0$, if for every $S \subseteq V(G)$, $|S| \le n/2$, we have
$$
    e_G(S) \le \begin{cases}
        \gamma d |S| & \text{if $|S| \le \beta n$} \\
       d |S|/3  & \text{if $\beta n < |S| \le n/2$.}
    \end{cases}
$$
A $d$-regular digraph $D$ is a \emph{$(\beta, \gamma)$-expander} if the $2d$-regular (multi-)graph $D_\emptyset$ obtained from $D$ by ignoring the directions of the edges is a \emph{$(\beta, \gamma)$-expander}.
\end{defn}

The bound on the number of edges within large subsets is somewhat arbitrary and chosen for convenience, akin to the one used by Alon and Capalbo \cite{alon2007finding}. The following is our main theorem.

\begin{thm}\label{thm:main}
Let $G$ be a $d$-regular $(\beta, \gamma)$-expander graph (directed or undirected) with $n$ vertices, for $200< d < n$, positive constants $\beta$ and $\gamma<1/1000$, and $n$ large enough. Then Router has a strategy to win the $r$-Routing game for $r = \varepsilon n d / \log n$, for a constant $\varepsilon = \varepsilon(\beta, \gamma) > 0$. Moreover, Router can respond to each request in $O(n^3 d^3)$ time and each path in $\mathcal P$ is of length $O(\log n)$.
\end{thm}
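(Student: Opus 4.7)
The strategy is to combine the Edge-Oracle data structure of Section \ref{sec:oracle} with a bidirectional breadth-first search. Rather than searching for a connecting path in the full graph $G \setminus E(\mathcal P)$, where the paths in $\mathcal P$ might concentrate at certain vertices and cause BFS to stall, Router consults the Edge-Oracle, which maintains a dynamically chosen subset $E' \subseteq E(G) \setminus E(\mathcal P)$ whose crucial property is that the degrees in $E'$ remain balanced across vertices. This balance, in turn, ensures that $(V(G), E')$ inherits the expansion of $G$ with only a mild loss in constants, and therefore has diameter $O(\log n)$.

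Given a Find-Path request $(a, b)$, Router performs a BFS from each endpoint, at each step asking the Edge-Oracle for the edges leaving the current frontier. Because $E'$ remains a $(\beta, \gamma')$-expander for a slightly worse constant $\gamma'$, each frontier grows by a constant factor while its size is at most $n/2$, so both BFS trees reach size more than $n/2$ within $O(\log n)$ levels and must therefore intersect. Concatenating the two half-paths produces a connection of length $O(\log n)$, which is added to $\mathcal P$; the oracle is then notified and ceases to offer the corresponding edges. A Remove-Path request is simpler: the edges of the removed path are returned to the oracle and become eligible to be offered again.

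To see that the global load invariant is feasible, note that Adversary itself guarantees the endpoint constraint of $d/200$ at every moment, and each path contributes exactly two edges to the usage count of each of its internal vertices. Since $|\mathcal P| \le \varepsilon n d/\log n$ and paths have length $O(\log n)$, the total edge usage is $O(\varepsilon n d)$, giving an \emph{average} load of $O(\varepsilon d)$ per vertex. The Edge-Oracle's role is precisely to convert this average bound into a \emph{uniform} bound: by preferentially routing through lightly loaded vertices and reserving a buffer of untouched edges at each vertex, it keeps the per-vertex usage below a small fraction of $d$ uniformly, which is exactly what is needed to preserve the expansion of $E'$ and prevent Router from ever getting stuck.

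The main obstacle is establishing these properties of the Edge-Oracle under an arbitrary online sequence of additions and removals: a naive oracle could be tricked into persistent local concentration of load by a carefully crafted add/remove pattern, so the oracle must anticipate adversarial futures, most likely via a potential-function or accounting argument that tracks, for each vertex, the imbalance between edges offered, used, and freed. Once the oracle's guarantees are in place, the remainder of the argument for the main theorem reduces to the standard fact that BFS in a bounded-degree expander reaches half the vertices within $O(\log n)$ levels. The running time bound $O(n^3 d^3)$ follows from the cost of each BFS plus the polynomial bookkeeping performed by the oracle per request.
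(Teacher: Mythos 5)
Your high-level intuition is right: the Edge-Oracle prevents local concentration of load so that BFS-style exploration does not stall. But the central structural move of the paper is missing, and the variant you propose does not actually go through.

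The key obstruction is the Edge-Oracle's budget. The data structure only supports \textsc{add-edge} while $|H| < \beta dn/120$, and this bound is essential to the internal analysis (it is what keeps $|\Low| < \beta n/12$, which in turn is what makes the alternating-walk argument work). A BFS tree that reaches $n/2$ vertices, each requesting $\Theta(d)$ out-edges from the oracle, would require $\Theta(nd)$ edges in $H$ --- far beyond the budget $\beta dn/120$ for any $\beta \le 1$. So ``grow both BFS frontiers to more than $n/2$ and intersect'' cannot be implemented via the oracle. Relatedly, the claim that the oracle's subset $E'$ ``inherits the expansion of $G$'' is not really accurate: $H$ is an extremely sparse, asymmetric subgraph (most vertices have no in- or out-edges in $H$), and the paper never asserts that it expands; what Claim~\ref{cl:BFS size} actually shows is that the BFS tree $H'$ doubles level by level because its vertices have $\Theta(d)$ out-edges while the oracle caps every in-degree by $\lfloor d/5\rfloor$, so too many internal $H'$-edges inside a small set would contradict the expansion of $G$ itself (not of $E'$).

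The paper's workaround, which you do not have, is a three-way decomposition $G = G_1 \cup G_2 \cup G_3$. Two separate oracles run on $G_1$ (for forward BFS from $a$) and on $\overleftarrow{G_2}$ (for backward BFS from $b$), each growing a frontier only to size $\Theta(\beta n)$ --- small enough to respect the oracle budget. The third graph $G_3$ is reserved purely to stitch together the two frontiers: Claim~\ref{cl:G3} shows that $G_3$ minus the $O(|\mathcal P|/\beta)$ already-used connector edges still contains a short ($O(1/\beta)$-length) path between any two linear-sized vertex sets. Your proposal has no analogue of this reserved connector graph, and without it the construction either blows the oracle budget or cannot close the path. The three-way split, and in particular using a separate graph for the connecting step, is the idea you would need to add; the rest (load bounded by $d/200$ at endpoints, the oracle enforcing $\din_H \le \lfloor d/5\rfloor$ internally, path length $O(\log n)$ by doubling frontiers, and $O(n^3d^3)$ runtime from $O(nd)$ oracle calls at $O(n^2 d^2)$ each) matches the paper's plan.
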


Let us remark that our main goal was to: (a) show that a strategy for winning an $r$-Routing game, for $r = \Theta(nd / \log n)$, exists, and (b) that it can be implemented in polynomial time. We leave it for future work to improve the time complexity. Moreover, the chosen constants could be improved with a more meticulous analysis; we made a conscious decision not to pursue this in order to maintain clarity and simplicity in our presentation.

%it is worth noting that the precise time complexity is not our primary concern; rather, the crucial aspect is that the algorithm operates within a polynomial-time framework. 

When applied on a graph with stronger guarantees on its edge distribution, the algorithm underpinning Theorem \ref{thm:main} allows for an improved bound on $r$ and the length of found paths. This is summarised in the following result.

% \RN{Dati skicu dokaza negde pred kraj. Reci da je sve isto kao i u slucaju expandera, osim sto u Claim C ovi skupovi $S_i$ rastu za faktor $\Theta(d)$. Kraj.}

\begin{defn}
We say a graph $G$ is an $(n, d, \lambda)$-graph if it is a $d$-regular graph with $n$ vertices and the second largest absolute eigenvalue of its adjacency matrix is at most $\lambda$.
\end{defn}

It is well-known that, when $\lambda$ is bounded away from $d$ , $(n, d, \lambda)$-graphs have stronger expansion properties than the ones given in Definition \ref{def:expander} (see \cite[Section 9.2]{alon2016probabilistic}). A random $d$-regular graph with $n$ vertices, for example, is with high probability an $(n, d, \Theta(\sqrt{d}))$-graph (see~\cite{friedman1991second}). Moreover, there exist explicit constructions of $(n, d, \Theta(\sqrt{d}))$-graphs, which include the famous Ramanujan graphs \cite{lubotzky1988ramanujan}. 

\begin{thm}\label{thm:ndL}
Let $G$ be an $(n, d, \lambda)$-graph for $200< d < n$, where $\lambda < \varepsilon d$ for a small enough constant $\varepsilon>0$. Then Router has a strategy to win the $r$-Routing game for $r = \alpha n d \frac{\log(d / \lambda)} {\log n}$, for some absolute constant $\alpha > 0$. Moreover, Router can respond to each request in $O(n^3 d^3)$ time and each path in $\mathcal P$ is of length $O\left(\frac{\log n}{
 \log (d / \lambda)}\right)$.
\end{thm}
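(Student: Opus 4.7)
The plan is to reuse the algorithm of Theorem~\ref{thm:main} essentially verbatim: the \textsc{Edge-Oracle} of Section~\ref{sec:oracle} still supplies, online, a subset of available edges that is well-spread across $V(G)$, and BFS from the two requested endpoints $a$ and $b$ on these available edges is still used to locate the path. The only substantive change is in the BFS depth analysis, which now exploits the stronger expansion of $(n,d,\lambda)$-graphs; this immediately shortens every path and, by the same global edge-budget accounting, inflates the allowed $r$ by the same logarithmic factor.

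The key new estimate to establish is that, provided $\lambda \le \varepsilon d$ with $\varepsilon$ a small enough constant, for any subgraph $G' \subseteq G$ obtained by deleting at most $d/100$ edges at each vertex, BFS from any vertex in $G'$ reaches $\Omega(n)$ vertices within $O(\log n / \log(d/\lambda))$ steps. This is a direct consequence of the expander mixing lemma: if $S$ is the current BFS ball and $T = V \setminus (S \cup N_{G'}(S))$, then every $S$--$T$ edge of $G$ must be one of the removed ones, so $e_G(S,T) \le (d/100)|S|$; combining with the mixing bound $e_G(S,T) \ge d|S||T|/n - \lambda\sqrt{|S||T|}$ and solving for $|T|$ yields $|N_{G'}(S) \cup S| \ge \Omega\!\left((d/\lambda)^2 |S|\right)$ as long as $|S|$ is not too small relative to $\lambda^2 n/d^2$, and $|N_{G'}(S) \cup S| \ge n/2$ once $|S| \ge \Omega(\lambda^2 n / d^2)$. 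The very-small-$|S|$ regime, where this inequality is vacuous, is covered directly by $d$-regularity: a single BFS step already yields $|S_1| \ge (99/100) d$, pushing us into the range where the EML-based iteration applies.

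Plugging this depth into the framework of Theorem~\ref{thm:main} yields the stated length bound $O(\log n / \log(d/\lambda))$ on each path in $\mathcal{P}$. The improved bound $r = O\!\left(nd \log(d/\lambda) / \log n\right)$ then follows from the same global edge-budget accounting as in Theorem~\ref{thm:main}: each of the $r$ active paths consumes $O(\log n / \log(d/\lambda))$ edges, while the \textsc{Edge-Oracle} can safely release a total of $\Theta(nd)$ edges (dictated by the per-vertex endpoint constraint of $d/200$ plus a bounded number of auxiliary reservations per vertex), so $r \cdot O(\log n / \log(d/\lambda)) = O(nd)$ forces exactly the claimed bound.

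The main obstacle is to verify that the \textsc{Edge-Oracle}, whose invariants were tuned to the weaker expansion in Definition~\ref{def:expander}, still delivers a subgraph of available edges whose per-vertex removals remain below the $d/100$ threshold demanded by the $(n,d,\lambda)$-expansion estimate above. Since the Oracle's invariants are dictated by the endpoint-load constraint plus a bounded reservation per vertex, this should carry over with at most a minor re-tuning of constants; no structural change to the Oracle is expected, and its correctness proof from Section~\ref{sec:oracle} should go through by substituting the new expansion lemma wherever Definition~\ref{def:expander} was invoked in the analysis.
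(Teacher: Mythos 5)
Your proposal follows essentially the same route as the paper's sketch: reuse the \textsc{Edge-Oracle} and BFS machinery of Theorem~\ref{thm:main} verbatim (after noting that an $(n,d,\lambda)$-graph with $\lambda<\varepsilon d$ is automatically a $(\beta,\gamma)$-expander with the constants the Oracle needs), and replace only the depth estimate in Claim~\ref{cl:BFS size} by a $\Theta\!\left((d/\lambda)^2\right)$ per-level expansion factor derived from the expander mixing lemma; the shorter depth then yields both the stated path length and, via the same $|H_1|,|H_2|\le cnd/2$ budget, the larger~$r$. One small imprecision worth flagging: the algorithm never constructs a subgraph $G'$ of minimum degree $(1-1/100)d$ --- the BFS tree $H'$ built through the Oracle supplies only $\lfloor d'/4\rfloor\approx d/40$ out-edges per non-leaf vertex --- so the mixing-lemma estimate must be applied to that sparser subgraph (which still works, since any constant fraction of out-degree suffices once one applies EML as an upper bound on $e_G(S,\,S\cup N_{H'}(S))$ rather than as a lower bound on $e_G(S,T)$; your lower-bound form is what makes the very-small-$|S|$ case look problematic, and also why your one-step patch $|S_1|\ge(99/100)d$ is both not directly justified for the actual BFS subgraph and, in the regime $d\ll\sqrt{n}$, not large enough by itself to enter the iteration you describe). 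With the EML used in the upper-bound form the $\Theta((d/\lambda)^2)$ expansion holds uniformly for all $|S|$ below linear size and the extra patch is unnecessary.
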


Finally, let us note that for our results we do not use the fact that a (di)graph $G$ is $d$-regular in an essential way, and the same algorithm would work assuming that each degree is within $(1 \pm \varepsilon)d$. The choice to work with regular graphs is, again, purely for simplicity.

\subsection{Notation}

If $G$ is a (directed) graph, we denote with $e_G(S)$ the number of edges with both endpoints in $S$, for a given $S \subseteq V(G)$. Let $D$ be a directed graph, and let $S_1,S_2 \subseteq V(D)$ be subsets of its vertices. By $\textrm{out}_D(S_1,S_2)$ we denote the number of edges $(u,v)$ with $u\in S_1$ and $v\in S_2$, 
and by $\textrm{in}_D(S_1,S_2)$ the number of edges $(u,v)$ with $u\in S_2$ and $v\in S_1$. We write $\textrm{out}_D(S_1)$ to denote $\textrm{out}_D(S_1,V(D))$.
We denote by $\textrm{Out}_D(S_1)$ the set of vertices $v\notin S_1$ such that there exists a vertex $u\in S_1$ with $(u,v)\in D$. Similarly, $\textrm{In}_D(S_1)$ denotes the set of vertices $v \notin S_1$ such that there exists an edge $(v, u) \in D$ for some $u \in S_1$. We write $\dout_D(v)$ and $\din_D(v)$ for the out- and in- degree of $v$. A digraph is $d$-regular if it both the out- and in-degree of every vertex is $d$. By $\overleftarrow{D}$ we denote the digraph obtained from $D$ by reversing the direction of each of its edges.

\section{Edge-Oracle: a data structure for accessing edges}

\label{sec:oracle}

% In this section, we will demonstrate our key result, which serves as the primary tool in our proofs. Essentially, we will show that for any expander digraph $G$, there exists an oracle that can add an edge $(v,u)$ for any requested vertex $v$ to a digraph $H\subseteq G$ (which is initially empty), or remove any requested edge from $H$. This oracle will enable us to construct large trees in expander graphs with small diameters while avoiding all previously constructed structures (paths) in the expander digraph we are working with.

The following lemma shows that one can choose, in an online manner, a subset of edges in a digraph $D$ with a significant imbalance between in- and out-degrees of some vertices. Why this is useful will become apparent in the next section, where we prove Theorem \ref{thm:main}. For now, let us remark that the lemma can be viewed as a generalization of a result of Aggarwal et al.~\cite[Theorem 2.2.7]{aggarwal96efficient} which, in turn, is an algorithmic version of an earlier result by Feldman, Friedman and Pippenger \cite{feldman88nonblocking}. 
%In particular, \cite[Theorem 2.2.7]{aggarwal96efficient} implicitly shows that if restrict to $O(n)$ edges, then the in-degree can be at most 1. In comparison, the next lemma  works with significantly denser subgraphs with $O(nd)$ edges.

\begin{lem} \label{lemma:oracle}
Suppose $D$ is a $d$-regular digraph with $n$ vertices, for some $d \ge 10$, such that for every $S \subseteq V(D)$ of size $|S| \le \beta n$ we have
$$
e_D(S) \le \gamma |S| d,
$$
for $\gamma\leq 1/50$. Then there exists a data structure, dubbed \textsc{Edge-Oracle$(D)$}, which maintains an initially empty set of edges $H \subseteq D$ and supports the following two requests:
\begin{itemize}
    \item \textsc{add-edge $v$}: Given a vertex $v \in V(D)$ with $\dout_H(v) < \lfloor d/2 \rfloor$, the data structure returns an out-edge $e = (v,w) \in D \setminus H$ of $v$ such that $\din_H(w) < \lfloor d/5 \rfloor$, and adds it to $H$.\\
    \textbf{Note:} this request is only allowed if $|H| < \beta d n / 120$.
    
    \item \textsc{remove-edge $e$}: Remove a given edge $e \in H$ from $H$.
\end{itemize}
Both requests are handled using a deterministic algorithm. The time complexity of \textsc{Add-Edge} is $O(n^2 d^2)$, and of \textsc{Remove-Edge} is $O(nd)$.
\end{lem}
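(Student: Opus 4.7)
The plan is to show that the required edge $(v, w) \in D \setminus H$ with $\din_H(w) < \lfloor d/5 \rfloor$ always exists under the stated preconditions, and then describe the simple data structure that locates it.

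Suppose for contradiction that no such edge exists. Write $\ell = \lfloor d/5 \rfloor$, let $S = \{w : \din_H(w) \ge \ell\}$ be the in-saturated set, and let $W = \{w : (v, w) \in D \setminus H\}$ be $v$'s free out-neighbourhood. By assumption, $W \subseteq S$. Since $\dout_H(v) < \lfloor d/2 \rfloor$, we have $|W| \ge \lceil d/2 \rceil + 1$. Every $w \in S$ contributes at least $\ell \ge d/6$ to $|H|$ via its in-edges, so $|S|\ell \le |H|$, giving $|S| \le 6|H|/d < \beta n/20 \le \beta n$.

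Now apply the expansion hypothesis to $T = S \cup \{v\}$ (so $|T| \le \beta n$). The edges $v \to W$ alone contribute at least $|W| \ge \lceil d/2 \rceil + 1$ to $e_D(T)$. Moreover, of the $|S|\ell$ edges of $H$ terminating in $S$, at most $|H|$ can originate outside $S$, so at least $|S|\ell - |H|$ of them lie entirely within $S$ and therefore contribute to $e_D(T)$. Combining:
\[
 \lceil d/2 \rceil + 1 + |S|\ell - |H| \;\le\; e_D(T) \;\le\; \gamma |T| d \;=\; \gamma(|S|+1)d.
\]
Rearranging yields $|S|(\ell - \gamma d) \le |H| + \gamma d - \lceil d/2 \rceil - 1$, and combining this with $|S| \ge |W| \ge d/2 + 1$ (since $W \subseteq S$), $\ell \ge d/6$, $\gamma \le 1/50$, and $|H| < \beta dn/120$ should force a numerical contradiction. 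For the algorithm, we simply store $H$ with the in/out-degree counters and, on \textsc{add-edge}($v$), scan $v$'s $d$ out-neighbours in $D$ and return any $(v, w) \notin H$ with $\din_H(w) < \ell$; the larger time bound $O(n^2 d^2)$ comes from the overhead of maintaining auxiliary structures that support these scans efficiently over many interleaved operations. On \textsc{remove-edge}, we delete $e$ from $H$ and update the counters.

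The main obstacle is the arithmetic closure of the contradiction above: the naive choice $T = W \cup \{v\}$ (with $e_D(T) \ge |W|$) suffices only for $d$ roughly below $1/(2\gamma)$, and for larger $d$ one really needs the strengthened set $T = S \cup \{v\}$ together with the lower bound on $H$-edges inside $S$ coming from the saturated in-degrees. The constants $1/2$, $1/5$, $1/50$, and $1/120$ in the lemma statement are calibrated precisely so the rearranged inequality leaves no slack, and care must be taken when $|S|\ell < |H|$ (where the bound on edges inside $S$ becomes trivial) so that the two case analyses together cover the full parameter range.
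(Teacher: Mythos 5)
There is a fundamental gap: you are attempting to prove a \emph{stateless} existence claim --- that for any set $H$ with $|H| < \beta dn/120$ and $\dout_H(v) < \lfloor d/2\rfloor$, some out-edge $(v,w)\in D\setminus H$ has $\din_H(w)<\lfloor d/5\rfloor$ --- but this claim is false, and your arithmetic does not close precisely because of this. Your rearranged inequality $|S|(\ell-\gamma d)\le |H|+\gamma d-\lceil d/2\rceil-1$ has a right-hand side that can be of order $\beta dn$, while the left-hand side (using $|S|\ge d/2$) is only of order $d^2$, so there is no contradiction when $d=o(n)$. And indeed there is a concrete counterexample: in a high-girth $d$-regular expander, take an $H$ that places $\lfloor d/5\rfloor$ in-edges on each of the $d$ out-neighbours of $v$ (none of them out-edges of $v$). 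This $H$ has size $\approx d^2/5 < \beta dn/120$ for $d = o(n)$, has $\dout_H(v)=0$, respects expansion, and yet every out-neighbour of $v$ is saturated. So no purely combinatorial ``scan the out-neighbours'' argument can work.

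The real content of the lemma is that the data structure must \emph{choose} which edges go into $H$ so as to avoid ever reaching such a bad state, and --- because \textsc{remove-edge} is adversarial --- it must be able to recover when removals push it toward one. The paper's construction maintains auxiliary state: a buffer set $B$ of reserved edges disjoint from $H$, a set $\Sat$ of vertices with high in-degree in $F=H\cup B$, and a set $\Low$ of vertices with many out-neighbours in $\Sat$. Whenever a vertex $x$ enters $\Low$, the algorithm proactively finds $\lfloor d/2\rfloor$ out-edges of $x$ to non-saturated endpoints by repeatedly augmenting along $(D\setminus F, B)$-alternating walks (in the spirit of Feldman--Friedman--Pippenger and Aggarwal et al.), and stores them in $B$. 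The expansion hypothesis is then used twice in that proof: once to bound $|\Low|$, and once to show an alternating walk to an unsaturated vertex always exists. Your proposal captures none of this dynamic invariant-maintenance machinery, which is the heart of the lemma; without it the claimed $O(n^2d^2)$ bound on \textsc{add-edge} (which in the paper comes from $O(nd)$ alternating-walk BFS searches of cost $O(nd)$ each) also has no derivation.
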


It is worth noting that the proposed implementation of the data structure does not depend on $\gamma$ and $\beta$, and it is up to the user of the data structure to respect the constraints under which \textsc{add-edge} can be invoked.

\begin{proof}[ of Lemma \ref{lemma:oracle}]
The implementation of \textsc{Add-Edge} is given in Algorithm \ref{alg:add-edge}, and the implementation of \textsc{Remove-Edge} in Algorithm \ref{alg:remove-edge}. In Algorithm \ref{alg:add-edge}, we use the following definition of an \emph{alternating walk}.

\begin{defn}[Alternating walk]
    \label{def:alternating}
    Given a subsets of edges $E_1, E_2 \subseteq E(D)$, we say that a sequence of vertices $(v_1, v_2, \ldots, v_k)$, which are not necessarily distinct, forms an \emph{$(E_1, E_2)$-alternating walk} $W$ from $v_1$ to $v_k$ if $(v_{2i+1}, v_{2i+2}) \in E_1$ for every $0 \le i \leq  k/2 - 1$, and $(v_{2i+1}, v_{2i}) \in E_2$ for every $1 \le i < k/2$.

    \noindent
    \textbf{Note:} the direction of the edges along $W$ does not form a directed walk. In particular, they form what is known as an \emph{anti-directed} walk.
\end{defn}

Other than the subset of edges $H$, the data structure  maintains a subset $B \subseteq D \setminus H$ of \emph{buffer} edges and two sets of vertices -- the \emph{saturated} ones denoted by $\Sat$ which are sinks of many edges in $H\cup B$, and those with a low number of out-neighbours which are not saturated, denoted by $\Low$. These sets are initially empty. In the description of the algorithm we use $F$ to denote the set of edges $H \cup B$, always with respect to the current $H$ and $B$.

\begin{algorithm}[h]
    % \SetKwFunction{FName}{add-edge}
    % \algotitle{Add-edge}{alg:add-edge}
    % \function{\FName{$v$}}{
    \Input{
        $v \in V(D)$ with $\dout_H(v) < \lfloor d/2 \rfloor$

        \vspace{0.5em}
    }

    \If{$v \in \Low$}{
        Choose any out-edge $e = (v,w) \in B$ of $v$

        Add $e$ to $H$

        Remove $e$ from $B$

        \Return{$e$}
    } \Else {        
        Choose any out-edge $e = (v,w) \in D \setminus F$ of $v$ such that $w \notin \Sat$

        Add $e$ to $H$

        \lIf{$\din_F(w) \ge d/10$}{add $w$ to $\Sat$}
        
        \While{there exists $x \in V(D) \setminus \Low$ such that $\dout_D(x, \Sat) \ge d / 4$}{
            Add $x$ to $\Low$
            
            \While{$\dout_F(x) < \lfloor d/2 \rfloor$}{
                Find a $(D \setminus F, B)$-alternating walk $W$ (see Definition \ref{def:alternating}) from $x$ to some $y$ with $\din_{F}(y) < \lfloor d/5 \rfloor$
                
                \For{$e \in P$}{
                    \lIf{$e \in D \setminus F$}{add $e$ to $B$}
                    \lElse{remove $e$ from $B$}
                }
    
                \lIf{$\din_{F}(y) \ge d/10$}{add $y$ to $\Sat$}
            }            
        }

        \Return{e}
    }
    % }
  \caption{\textsc{add-edge$(v)$}}
  \label{alg:add-edge}
\end{algorithm}

\begin{algorithm}[h]
    % \SetKwFunction{FName}{remove-edge}
    % \algotitle{remove-edge}{alg:remove-edge}
    % \function{\FName{$e$}}{
    \Input{
        edge $e = (v,w) \in H$

        \vspace{0.5em}
    }
    
        Remove $e$ from $H$

        \lIf{$v \in \Low$}{add $e$ to $B$}
        
        \ElseIf{$w \in \Sat$ and $\din_H(w) < d/10$}{            
            Remove $w$ from $\Sat$
            
            \While{there exists $x \in \Low$ with $\dout_D(x, \Sat) < d/4$}{
                Remove from $B$ all the out-edges of $x$

                Remove $x$ from $\Low$

                Remove every $y \in \Sat$ with $\din_F(y) < d/10$ from $\Sat$
            }
        }
    % }
  \caption{\textsc{remove-edge$(e)$}}
  \label{alg:remove-edge}
\end{algorithm}

\paragraph{Invariants.} Let us, for now, assume that lines 2, 7, and 13 in $\textsc{add-edge}$ are well-defined, that is, a desired edge or a walk exist whenever these lines are executed. More formally, we could say that whenever we reach one of these lines and a desired edge/walk does not exist, we abandon the current request and reset the internal state to how it was right before the request. Then, before handling each request (either \textsc{add-edge} or \textsc{remove-edge}), the following properties hold:
\begin{enumerate}[(P1)]
    \item \label{p:sat} $\Sat = \{v \in V(D) \colon \din_F(v) \ge d/10 \}$,
    \item \label{p:low} $\Low = \{v \in V(D) \colon \dout_D(v, \Sat) \ge d/4 \}$, and
    \item \label{p:F_low} if $v \in \Low$ then $\dout_F(v) = \lfloor d/2 \rfloor$.
\end{enumerate}
Moreover, the following holds throughout each execution of \textsc{add-edge}:
\begin{enumerate}[(C1)]    
    \item \label{p:satc} if $v \in \Sat$ then $\din_F(v) \ge d/10$,
    \item \label{p:lowc} if $v \in \Low$ then $\dout_D(x, \Sat) \ge d/4$,
    \item \label{p:buffer} if  $v \notin \Low$ then $\dout_B(v) = 0$, and
    \item \label{p:F} $\dout_F(v) \le \lfloor d/2 \rfloor$ and $\din_F(v) \le \lfloor d/5 \rfloor$ for every $v \in V(D)$.
\end{enumerate}
These properties follow from the description of the algorithm and the fact that they are trivially satisfied before the first request. We now show a useful corollary of \ref{p:satc}--\ref{p:F}.

\begin{claim} \label{claim:low}
    Throughout the execution of \textsc{add-edge} we have $|\Low| < \beta n / 12$.
\end{claim}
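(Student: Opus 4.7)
The plan is to argue by contradiction via a double count of arcs between $\Low$ and $\Sat$, with the small-set expansion hypothesis $e_D(S) \le \gamma d |S|$ providing the punchline. Intuitively, every vertex of $\Low$ pushes $\Omega(d)$ arcs of $D$ into $\Sat$, while every vertex of $\Sat$ absorbs $\Omega(d)$ arcs of $F$; so a too-large $\Low$ forces $\Sat$ to be large as well, packing too many edges inside $\Low \cup \Sat$ to be compatible with the expansion of $D$.

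First I will bound $|\Sat|$ in terms of $|\Low|$ at an arbitrary moment during \textsc{add-edge}. Summing \ref{p:satc} over $v \in \Sat$ gives $|F| \ge |\Sat|\,d/10$. On the other hand, \ref{p:buffer} says that every edge of $B$ has its tail in $\Low$, and \ref{p:F} caps the total $F$-out-degree of each such tail at $\lfloor d/2 \rfloor$, hence $|B| \le |\Low|\,d/2$. Combining with $|H| \le \beta d n / 120$ (the precondition, together with the at-most-one increase to $|H|$ on lines~3 or~8) yields $|F| \le \beta d n / 120 + |\Low|\,d/2$, and therefore $|\Sat| \le \beta n/12 + 5|\Low|$.

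Next I will count arcs from $\Low$ to $\Sat$ inside $D$. By \ref{p:lowc}, each $v \in \Low$ satisfies $\dout_D(v,\Sat) \ge d/4$, so the number of such arcs, which all lie inside $\Low \cup \Sat$, gives $e_D(\Low \cup \Sat) \ge |\Low|\,d/4$. Now suppose for contradiction that at some moment $|\Low|$ reaches $\lceil \beta n/12 \rceil$; at that moment the previous paragraph gives $|\Low \cup \Sat| \le 6|\Low| + \beta n/12 \le 7\beta n/12 < \beta n$ for $n$ sufficiently large, so the expansion hypothesis applies and $e_D(\Low \cup \Sat) \le \gamma d(|\Low| + |\Sat|) \le \gamma d(6|\Low| + \beta n/12)$. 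Comparing with the lower bound and using $\gamma \le 1/50$, so that $1/4 - 6\gamma \ge 13/100 > 0$, a short rearrangement yields $|\Low| \le \beta n/78$, contradicting $|\Low| \ge \beta n/12$.

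The part requiring the most care is that the entire argument invokes the invariants \ref{p:satc}--\ref{p:F} \emph{during} \textsc{add-edge}, not merely between requests; the weaker between-requests invariants \ref{p:sat}--\ref{p:F_low} would not suffice. Fortunately this is exactly the content of the ``throughout each execution'' clause attached to \ref{p:satc}--\ref{p:F} just above the claim, so using them in mid-call is legitimate. Beyond that, everything is a routine double count followed by one application of small-set expansion, and the constants carry comfortable slack.
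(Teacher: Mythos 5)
Your proof is correct and follows essentially the same route as the paper's: both sum \ref{p:satc} over $\Sat$ to bound $|\Sat|$ via $|F|$, bound $|F|$ using the precondition $|H| \le \beta d n/120$ together with \ref{p:buffer}, take the first moment $|\Low|$ crosses the threshold so that $|\Low\cup\Sat| \le \beta n$, and then pit the lower bound on $e_D(\Low\cup\Sat)$ coming from \ref{p:lowc} against the small-set expansion hypothesis. The only differences are cosmetic -- you use $\lfloor d/2\rfloor$ rather than $d$ to bound $\dout_B(\Low)$, and your final inequality invokes an ``$n$ sufficiently large'' qualifier that the lemma does not assume (it can be dispensed with by treating the $\lceil\beta n/12\rceil$ rounding a bit more carefully, much as the paper implicitly does with $\lfloor\beta n/12\rfloor$).
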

\begin{proof}
    Suppose, towards a contradiction, that there exists a series of requests which results in $|\Low| \ge \lfloor \beta n / 12\rfloor$ at some point. Consider the first moment when this happens, in which case we have that equality holds, and note that this has to be during an execution of \textsc{add-edge}, as during the execution of \textsc{remove-edge} no vertices are added to $\Low$. Then
    \begin{align*}
        |\Sat| d / 10 \stackrel{\ref{p:satc}} \le \din_F(\Sat) &\le e(F) = e(H) + e(B) < \beta d n / 120 + e(B) \\ &\stackrel{\ref{p:buffer}}{\leq} \beta d n / 120 + \dout_B(\Low) \le \beta d n / 120 + |\Low|d \le (1 + 1/10) |\Low| d,
    \end{align*}
    where the last inequality follows from the assumption on the size of $\Low$. Therefore,
    $$
        |\Sat \cup \Low| \le 12 |\Low| \leq \beta n.
    $$
    On the one hand, by the assumption on $D$ we have
    $$
        \dout_D(\Low, \Sat) \le e_D(\Low \cup \Sat) \le |\Low \cup \Sat| \gamma d < 12 \gamma |\Low| d,
    $$
    and on the other hand, by \ref{p:lowc},
    $$
        \dout_D(\Low, \Sat) > d |\Low| / 4,
    $$
    which leads to a contradiction for $\gamma<1/48$.
\end{proof}

\paragraph{Correctness.} We need to check that a returned edge $e = (v,w)$ has the required properties, that is, $\din_H(w) < \lfloor d/5 \rfloor$. If $v \in \Low$ (line 1), then this follows from \ref{p:F}, $F = H \cup B$, and $e \in B$. Otherwise, it follows from $w \notin \Sat$ (line 7) and \ref{p:sat}. 

\paragraph{The algorithm {\sc add-edge} is well-defined.} As lines 2 and 7 happen before any changes to the internal state of the data structure, we can use \ref{p:sat}--\ref{p:F_low} to prove these steps are well-defined. The existence of an edge $e = (v,w) \in B$ in line 2 follows from $\dout_H(v) < \lfloor d/2 \rfloor$ (the assumption of $\texttt{add-edge}$), \ref{p:F_low}, and $F = H \cup B$. The existence of a required edge in line 7 follows from \ref{p:F}, which implies $\dout_{D \setminus F}(v) \ge d - \lfloor d/2 \rfloor \ge d/2$ (recall $D$ is $d$-out-regular), and \ref{p:low} together with $v \notin \Low$. Showing that line 13 is well-defined is more complicated, and this is what we do next.

\begin{claim}
    Line 13 is well-defined.
\end{claim}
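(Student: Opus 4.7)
My plan is to argue by contradiction. Suppose that at the moment line 13 is executed, no $(D \setminus F, B)$-alternating walk from $x$ reaches a vertex $y$ with $\din_F(y) < \lfloor d/5 \rfloor$. Let $U$ (respectively $W$) denote the set of vertices that appear as an odd-indexed (respectively even-indexed) vertex of some such alternating walk starting at $x$; by the contradictory assumption combined with \ref{p:F}, every $w \in W$ satisfies $\din_F(w) = \lfloor d/5 \rfloor$. Three structural properties carry most of the argument: (i) $U \subseteq \Low$, since each $u \in U \setminus \{x\}$ must be the tail of a $B$-edge used in some walk to reach $u$ from a preceding even-indexed vertex, and by \ref{p:buffer} only $\Low$-vertices have $B$-out-edges, while the seed $x$ is added to $\Low$ in the line immediately above line 13; (ii) $W$ contains every $D \setminus F$-out-neighbour of $U$, directly from the walk definition; and (iii) every $B$-edge $(u,w)$ with $w \in W$ has $u \in U$, since prolonging a walk reaching $w$ at an even position by this edge yields a walk reaching $u$ at an odd position.

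Next I would invoke the edge expansion assumption on $S := U \cup W$. First, $|S| \le \beta n$: Claim~\ref{claim:low} gives $|U| \le |\Low| < \beta n/12$, and the crude bound $|F| \le |H| + |\Low|\lfloor d/2 \rfloor \le \beta d n /20$ combined with $|W|\lfloor d/5 \rfloor \le \din_F(W) \le |F|$ gives $|W| \le \beta n/2$ once $d \ge 10$. By \ref{p:F}, $\dout_{D \setminus F}(u) \ge \lceil d/2 \rceil$ for every $u$, so by (ii) every edge counted in $\dout_{D \setminus F}(U) \ge |U|d/2$ is internal to $S$, giving $e_D(S) \ge |U|d/2$; combining with $e_D(S) \le \gamma|S|d$ and $\gamma \le 1/50$ yields $|S| \ge |U|/(2\gamma) \ge 25|U|$, hence $|W| \ge 24|U|$.

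The closing step is a double count of $F$-in-edges of $W$:
\[
24|U|\lfloor d/5 \rfloor \;\le\; |W|\lfloor d/5 \rfloor \;\le\; \din_F(W) \;=\; \din_H(W) + \din_B(W) \;\le\; |H| + \dout_B(U) \;<\; \frac{\beta d n}{120} + |U|\lfloor d/2 \rfloor,
\]
where the bound on $\din_B(W)$ uses property (iii), and $\dout_B(u) \le \dout_F(u) \le \lfloor d/2 \rfloor$ follows from \ref{p:F}. This rearranges to $|U|\bigl(24\lfloor d/5 \rfloor - \lfloor d/2 \rfloor\bigr) < \beta d n/120$. I expect the main technical obstacle to be extracting a clean contradiction from this inequality, since $x \in U$ only provides the weak lower bound $|U| \ge 1$. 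Closing the argument requires using the strict inequalities carefully---in particular the strict bound $|H| < \beta d n /120$, and the strict inequality $\dout_F(x) < \lfloor d/2 \rfloor$, which saves at least one unit in $\dout_B(U)$---together with the prescribed ranges $d \ge 10$ and $\gamma \le 1/50$. If the one-shot estimate does not close, the natural backup is to iterate the expansion-plus-counting argument across successive layers of the alternating BFS, thereby sharpening the $|W|/|U|$ ratio beyond the single-step factor of $24$.
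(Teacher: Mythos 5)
Your structural claims (i)--(iii) are correct, and the expansion step giving $|W| \ge 24|U|$ is sound, but the difficulty you flag at the end is a genuine gap, not a technicality. The closing inequality $|U|\bigl(24\lfloor d/5 \rfloor - \lfloor d/2 \rfloor\bigr) < |H| \le \beta d n/120$ is perfectly consistent with $|U|=1$: the right-hand side $|H|$ is a \emph{global} quantity of order $\beta d n$, completely unrelated to the size of the alternating-reachable set, so even the factor $\approx 4d$ on the left only yields $|U| = O(\beta n)$. Your proposed backup of iterating the expansion across layers of the alternating BFS to sharpen the $|W|/|U|$ ratio would not help, because the $|H|$ term is external and survives iteration unchanged; no amount of sharpening of the internal ratio can beat an additive term that does not shrink.

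The missing idea is a \emph{localization to the current call of} \textsc{add-edge}. The paper fixes $\Sat_0,\Low_0,B_0,F_0$ to their values just before the invocation and restricts the reachable set to $Y \subseteq V(D)\setminus\Sat_0$. By \ref{p:sat}, every such $y$ had $\din_{F_0}(y) < d/10$, so the bulk of $\din_F(y) \ge \lfloor d/5 \rfloor$ must come from $\Delta := B\setminus B_0$, i.e.\ buffer edges created \emph{during this call}. By \ref{p:buffer} their sources $X=\mathrm{In}_\Delta(Y)$ lie in $\Low\setminus\Low_0$, and the in-degree count gives the local bound $|Y|\le 11|X|$ from $\din_\Delta(Y) \le d|X|$ --- replacing your useless global $|H|$ term. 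Since each $x\in X$ was outside $\Low_0$, \ref{p:low} pre-call gives $\dout_D(x,\Sat_0)<d/4$, hence $e_D(X\cup Y)\ge \dout_{D\setminus F}(X,Y)\ge |X|d/4$, while expansion and Claim~\ref{claim:low} give $e_D(X\cup Y)<12\gamma|X|d$. These two bounds are both proportional to $|X|$, so they clash outright for $\gamma<1/48$, with no need for a lower bound on $|X|$ beyond $|X|\ge 1$. That proportionality is exactly what your double count lacks.
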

\begin{proof}
Throughout the argument we use the subscript $0$, as in $\Sat_0$ for example, to denote the state of a set as it was just before the current invocation of \textsc{add-edge}. Consider some point during the execution of the algorithm when it has reached line 13. We need to show that, at that point, a desired alternating walk exists. Suppose, towards a contradiction, that this is not the case.

Let $\Delta = B \setminus B_0$ and note that $F\setminus F_0 = \Delta \cup \{e\}$. Let $Y \subseteq V(D) \setminus \Sat_0$ be the set of all $y \in V(D) \setminus \Sat_0$ such that there exists an $(D \setminus F, B)$-alternating walk from $x$ to $y$. As $x$ was added to $\Low$ in line 11, we know $x \notin \Low_0$, and hence by \ref{p:low} we have $\dout(v,\Sat_0)<d/4$. Together with $\dout_F(x) < d/2$ (condition in line 12), this implies $Y$ is non-empty because $\dout_{D\setminus F}(x,V(D)\setminus \Sat_0)\geq d-d/2-d/4$. Set $X = \mathrm{In}_\Delta(Y)$ (as $x \in X$, it is non-empty as well). By \ref{p:buffer} and the definition of $\Delta$, since $\Delta$ does not contain edges emanating from $\Low_0$, we have $X \subseteq \Low \setminus \Low_0$.

Observe that $\din_F(y) \ge \lfloor d/5 \rfloor$ for every $y \in Y$, as otherwise a desired path exists. From 
$$
    \din_F(y) = \din_{F_0}(y) + \din_{\Delta}(y) + \mathbb{1}_{y=w}
$$
and $y \notin \Sat_0$, we conclude
$$
    \din_{\Delta}(y) \ge \lfloor d/5 \rfloor - 1 - d/10 > d/11.
$$
This implies
$$
    |Y| d / 11 \le \din_{\Delta}(Y, X) \le d|X|,
$$
thus $|Y| \le 11 |X|$.

Next, note that for every $y \in \textrm{Out}_{D \setminus F}(X)$ there exists an $(D \setminus F, B)$-alternating walk from $x$ to $y$, thus $\textrm{Out}_{D \setminus F}(X) \subseteq \Sat_0 \cup Y$ as otherwise we get a contradiction with the choice of $Y$. On the one hand, from \ref{p:F}, $X \cap \Low_0 = \emptyset$, and \ref{p:low},  we have
$$
    e_D(X\cup Y) \ge \dout_{D \setminus F}(X, Y) \ge \sum_{x \in X} \dout_{D \setminus F}(x) - \dout_{D}(x, \Sat_0) \ge |X|d/4.
$$
On the other hand, since $|Y|\leq 11|X|$ and $X \subseteq \Low$, by Claim \ref{claim:low} we conclude $|X \cup Y| < \beta n$, thus by the assumption on $D$ we have
$$
    e_D(X\cup Y) < 12\gamma|X| d,
$$
which is a contradiction for $\gamma < 1/48$.
\end{proof}

\paragraph{Complexity.} Finding an alternating walk in line 14 can be done using \textsc{BFS}, which takes $O(nd)$ time. Within one call of \textsc{add-edge} this line is executed at most $O(nd)$ times, which gives $O(n^2 d^2)$ and this dominates complexity coming from any other step of \textsc{add-edge}. With careful bookkeeping, \textsc{remove-edge} can be implemented in $O(nd)$.
\end{proof}
\section{Disjoint paths via Edge-Oracle}

\label{sec:main_result}

In this section we prove Theorem~\ref{thm:main}.

\paragraph{Pre-processing $G$.} Suppose that $G$ is an undirected graph. If $d$ is even, we find an Eulerian trail in $G$ and orient the edges along this trail to obtain a $(d/2)$-regular digraph $D$. Note that if $G$ is a $(\beta, \gamma)$-expander, then so is $D$, by definition. If $d$ is odd, we first remove a perfect matching (the existence follows from Tutte's theorem and the fact that $G$ is $d$-edge-connected) and then repeat the previous procedure. Note that in this case we end up with a digraph which is not quite a $(\beta, \gamma)$-expander, but the upper bounds on $e_D(S)$ (see Definition \ref{def:expander}) hold up to an additive factor of $O(d)$, which is negligible thus we will not concern ourselves with it.

% we (as Router) remove from $G$ a perfect matching to obtain a $(d-1)$-regular graph. Existence of a perfect matching follows from Tutte's theorem and the fact that $G$ is $d$-edge-connected, and it can be found in polynomial time using, say, the blossom algorithm for maximum matching algorithm. Now that $G$ is $d$-regular for even $d'$ (being either $d$ or $d-1$), (In the case $d$ is odd then, as we have removed a perfect matching, 

% and $d$-regular for even $d$, we can orient it so that the resulting directed graph is $\frac{d}{2}$-regular. Indeed, since we know that $G$ is an edge-disjoint union of two factors (which can be found efficiently as in Lemma~\ref{lem:directed factor}), we can orient the edges according to an arbitrary direction of the cycle they are contained in. If $d$ is odd, then $G$ contains a perfect matching \ND{Treba nam u polinomnom vremenu ovo -- jel imas referencu ili jako kratak dokaz?}, so after removing the edges of that matching, we can orient the remaining edges to obtain a $\frac{d-1}{2}$-regular expander (which is not exactly a $(\beta,\gamma)$-expander since we removed a matching, but the impact on the constants is negligible).

By the previous discussion, we can assume now that $G$ is an $(\beta, \gamma)$-expanding $d$-regular digraph. Next, we split $G$ into three disjoint (spanning) subgraphs $G = G_1 \cup G_2 \cup G_3$, where $G_1$ and $G_2$ are $d'$-regular for $d' = \lfloor d / 10 \rfloor$, and $G_3$ is consequently $(d - 2d')$-regular. This can be done as follows: Create an auxiliary bipartite graph $B = (V_1 \cup V_2, E_B)$, where each $V_i$ is a copy of $V(D)$ and there is an edge between $v \in V_1$ and $w \in V_2$ if $(v,w) \in D$. Then $B$ is $d$-regular, hence its edges can be decomposed into $d$ perfect matchings (in polynomial time). Take any $d'$ perfect matchings and assign the corresponding edges to $G_1$, and another $d'$ (different) perfect matchings and assign the corresponding edges to $G_2$.

% This is possible by Lemma~\ref{lem:directed factor}, the proof of which also describes an efficient algorithm how to do it. 

\paragraph{Internal state.}
We maintain an initially empty sets of edges $H_1 \subseteq G_1, H_2 \subseteq \overleftarrow{G_2}$, and $H_3 \subseteq G_3$, and make use of two instances of \textsc{Edge-Oracle} data structure, $\texttt{out-oracle} = \textsc{Edge-Oracle}(G_1)$ and $\texttt{in-oracle} = \textsc{Edge-Oracle}(\overleftarrow{G_2})$, noting that both $G_1$ and $\overleftarrow{G_2}$ are $d'$-regular and for every $S \subseteq V(G_i)$ of size $|S| \le \beta n$, $i \in \{1,2\}$, we have
$$
e_{G_i}(S) \le \gamma|S|2d=   20\gamma |S| d'\leq \frac{1}{50}|S|d',
$$
as $\gamma<\frac{1}{1000}$, so the conditions of Lemma~\ref{lemma:oracle} are satisfied for $G_1$ and $\overleftarrow{G_2}$. Recall that we can add and remove edges from $H_i$ using the respective oracle as long as $|H_i|\leq c nd$ where $c= \beta/1200$.

\paragraph{Algorithm.} We process $\textsc{Find-Path}(a,b)$ as follows:
\begin{enumerate}
    \item Let $(H_a, V_a) = \textsc{Oracle-BFS}(a, \texttt{out-oracle})$ and $(H_b, V_b) = \textsc{Oracle-BFS}(b, \texttt{in-oracle})$.
    \item Find a shortest path $P'$ from $V_a$ to $V_b$ in $G_3 \setminus H_3$. Let $a' \in V_a$ be the starting point of $P'$, and $b' \in V_b$ the ending point.
    \item Take a shortest path $P_a$ from $a$ to $a'$ in $H_a$ and a shortest path $P_b$ from $b$ to $b'$ in $H_b$. The desired path $P$ is then the concatenation of $P_a$, $P'$, and $\overleftarrow{P_b}$.
    \item \textbf{(Internal update)} For every $e \in H_a \setminus E(P_a)$ call $\textsc{Remove-Edge}(e)$ on \texttt{out-oracle}. Similarly, for every $e \in H_b \setminus E(P_b)$ call $\textsc{Remove-Edge}(e)$ on \texttt{in-oracle}. Set $H_1 := H_1 \cup E(P_a)$, $H_2 := H_2 \cup E(P_b)$, and $H_3 := H_3 \cup E(P')$.
\end{enumerate}

\begin{algorithm}[h]   
    \Input{
        a vertex $v \in V$ \\
        an instance \texttt{oracle} of \textsc{Edge-Oracle} data structure

        \vspace{0.5em}
    }

    \texttt{q} = \textsc{Queue}(v)

    $V' = \{v\}$
    
    $H' = \emptyset$

    \While{\emph{\texttt{q}} is not empty, $|V'| \le \beta n/5 $, and $e(H') < c n d/2$}{
        $u = $ $\texttt{q}.\textsc{dequeue}()$

        % Remove $u$ from $\texttt{queue}$

        \For{$i = 1, \ldots, \lfloor d'/4 \rfloor$}{
            $(u,w) = \texttt{oracle}.\textsc{add-edge}(u)$
        
            \If{$w \not \in V'$}{
               $\texttt{q}.\textsc{enqueue}(w)$

                Add $w$ to $V'$
            }

            Add the edge $(u,w)$ to $H'$
        }
    }

    \Return{$(H', V')$}
    
  \caption{\textsc{Oracle-BFS}$(v, \texttt{oracle})$}
  \label{alg:oracle-bfs}
\end{algorithm}

\noindent
Processing $\textsc{Remove-Path}(P)$ is significantly simpler:
\begin{enumerate}
    \item For each $e \in H_1 \cap E(P)$ call $\textsc{Remove-Edge}(e)$ on \texttt{out-oracle}. For each $e \in H_2 \cap E(\overleftarrow{P})$ call $\textsc{Remove-Edge}(e)$ on \texttt{in-oracle}.
    \item Set $H_1 := H_1 \setminus E(P)$, $H_2 := H_2 \setminus E(\overleftarrow{P})$, and $H_3 := H_3 \setminus E(P)$.
\end{enumerate}

\noindent
Note that \textsc{Add-Edge} and \textsc{Remove-Edge} are called $O(nd)$ times, which results in $O(n^3 d^3)$ and this dominates all other steps of the algorithm.

In the rest of the proof we show that the algorithm for finding new paths is well-defined. We denote by $\mathcal P_s(v)$ the number of paths in $\mathcal P$ which start at $v$, and by $\mathcal P_e(v)$ the number of those which end at $v$. Recall that, by the rules of the game, we always have $|\mathcal P_s(v)|,|\mathcal{P}_e(v)|\leq d/200$, and that $|\mathcal P|\leq r=\varepsilon nd/\log n,$ for a small enough $\varepsilon=\varepsilon(\gamma,\beta)$.

\paragraph{Invariants.} Before each request the following properties hold:
\begin{enumerate}[(P1)]
    \item \label{mainp:union} $\bigcup_{P \in \mathcal{P}} E(P) = H_1 \cup \overleftarrow{H_2} \cup H_3$,
    \item \label{mainp:size} $|H_1|, |H_2| \le |\mathcal{P}| \log n\leq c nd/2$ and $|H_3| \le 300 |\mathcal{P}| / \beta$,
    \item \label{mainp:out} $\dout_{H_1}(v) \le \din_{H_1}(v) + \mathcal{P}_s(v)$ and $\dout_{H_2}(v) \le \din_{H_2}(v) + \mathcal{P}_e(v)$ for every $v \in V$,
    \item \label{mainp:in} $\din_{H_i}(v) \le \lfloor d' / 5 \rfloor$ for every $v \in V$ and $i \in \{1,2\}$.
\end{enumerate}
The property \ref{mainp:in} is responsible for bounding the number of times each vertex is used as an inner vertex of paths in $\mathcal{P}$, and is the heart of the proof!

The property \ref{mainp:union} follows from the description of the algorithm. To see that the property \ref{mainp:out} holds, note that every out-edge of a vertex in a path which belongs to $H_1$ is preceded by an in-edge of $v$ in $H_1$, unless it is the first vertex which is accounted by $\mathcal{P}_s(v)$. Note that we do not have equality as the out-edge and the in-edge might not belong to the same $H_i$. The second inequality is \ref{mainp:union} is obtained analogously, taking into account that the edges in $H_2$ are reversed compared to $G$. Since $H_i$ for $i \in \{1, 2\}$ is obtained through \textsc{Add-Edge} requests to an oracle, property \ref{mainp:in} holds by the definition.
It remains to show that \ref{mainp:size} holds. This is done in the following two claims.

% \begin{algorithm}[h]   
%     $(H_a, V_a) = \textsc{Oracle-BFS}(a, \texttt{out-oracle})$
    
%     $(H_b, V_b) = \textsc{Oracle-BFS}(b, \texttt{in-oracle})$

%     \vspace{0.5em}
    
%     $P' = $ a shortest directed path in $G_3 \setminus H_3$ from $V_a$ to $V_b$

%     Add each $e \in P'$ to $H_3$

%     Let $a'$ and $b'$ be the endpoints of $P'$ (i.e.\ $P'$ goes from $a'$ to $b'$)

%     \vspace{0.5em}
%     $P_a = $ a shortest directed path from $a$ to $a'$ in $H_a$

%     \lFor{$e \in H_a \setminus P_a$}{
%         $\texttt{out-oracle}.\textsc{remove-edge}(e)$
%     }

%     Add each $e \in P_a$ to $H_1$

%     \vspace{0.5em}
%     $P_b = $ a shortest directed path from $b'$ to $b$ in reversed $H_b$
    
%     \lFor{$e \in G_{\textrm{in}} \setminus P_{\textrm{in}}$}{
%         $\texttt{in-oracle}.\textsc{remove-edge}(e)$
%     }

%     Add each $e \in P_b$ to $H_2$

%     \vspace{0.5em}
%     $P = P_\textrm{out} \cup P' \cup P_{\textrm{in}}$

%     \Return{$P$}
    
%   \caption{\textsc{find-path$(a,b)$}}
%   \label{alg:find-path}
% \end{algorithm}

% \begin{algorithm}[h]   
%     \Input{
%         a path $P \in \mathcal{P}$

%         \vspace{0.5em}
%     }

%     \For{$e \in P$}{
%         \If{$e \in H_1$} {
%             $\texttt{out-oracle}.\textsc{remove-edge}(e)$
            
%             Remove $e$ from $H_1$
%         } \ElseIf{$e \in H_2$} {
%             $\texttt{in-oracle}.\textsc{remove-edge}(e)$
            
%             Remove $e$ from $H_2$
%         } \Else {
%             Remove $e$ from $H_{3}$
%         }
%     }
    
%   \caption{\textsc{remove-path$(P)$}}
%   \label{alg:remove-path}
% \end{algorithm}

\begin{claim}\label{cl:BFS size}
    Assuming that every call of \textsc{Add-Edge} in line 7 of Algorithm \ref{alg:oracle-bfs} is valid (that is, assumptions of \textsc{Add-Edge} are satisfied), after \textsc{Oracle-BFS} finishes, every vertex in $V'$ is at distance at most $\log n$ from $v$, using only edges in $H'$. Furthermore, $|V'|\geq \beta n/5$.
\end{claim}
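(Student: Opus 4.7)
The plan is to combine the BFS structure of Algorithm~\ref{alg:oracle-bfs} with the in-degree invariant guaranteed by Lemma~\ref{lemma:oracle}. Each call \textsc{Add-Edge}$(u)$ returns an out-edge $(u,w)$ with $\din_{H}(w) < \lfloor d'/5 \rfloor$ at the moment of insertion; since $H' \subseteq H$ throughout the BFS (every edge added to $H'$ is also added to the oracle's $H$), we get $\din_{H'}(w) \le \lfloor d'/5 \rfloor$ for every $w$, which is the workhorse of both parts of the claim.

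For the distance bound, I use the standard BFS property that vertices are dequeued in order of non-decreasing depth, so the distance in $H'$ from $v$ to any $w \in V'$ equals the BFS depth of $w$. Let $V_k\subseteq V'$ be the set of vertices at depth at most $k$. Once all of $V_{k-1}$ has been dequeued, exactly $|V_{k-1}| \cdot \lfloor d'/4 \rfloor$ edges have been added to $H'$, and each of them is an in-edge of some vertex of $V_k$. Combined with the in-degree cap this gives
$$
|V_k|\cdot \lfloor d'/5\rfloor \;\ge\; |V_{k-1}| \cdot \lfloor d'/4 \rfloor,
$$
so $|V_k| \ge (1+\eta)|V_{k-1}|$ for some constant $\eta>0$ (using $d' \ge 20$, which holds since $d > 200$). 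Iterating from $|V_0|=1$ yields exponential growth, so $|V_k|$ surpasses $\beta n/5$ after $k = O(\log n)$ levels, at which point the while-loop must already have stopped. This establishes the distance bound (with the log base absorbed into the $O(\cdot)$).

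For the size bound, I want to show that the while-loop terminates through the $|V'|\le \beta n/5$ clause rather than through queue-emptiness or the edge cap. If the queue were empty, every $w\in V'$ would have been dequeued, so $e(H') = |V'| \cdot \lfloor d'/4\rfloor$, which contradicts $e(H') \le |V'|\cdot \lfloor d'/5\rfloor$ since $\lfloor d'/4\rfloor>\lfloor d'/5\rfloor$ for $d'\ge 20$. To rule out edge-cap termination, I combine the exponential growth above with the relation $e(H') \le |V'|\cdot \lfloor d'/5\rfloor$ and the parameters $d'\le d/10$, $c=\beta/1200$ to check that $|V'|$ crosses $\beta n/5$ before $e(H')$ crosses $cnd/2$; this is a short constant-chasing computation. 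The main obstacle is precisely this edge-count verification, since the per-level growth ratio $\lfloor d'/4\rfloor/\lfloor d'/5\rfloor$ is only about $5/4$ in the worst case, so one has to be careful that the parameters in Section~\ref{sec:main_result} are tuned to favour termination through the $|V'|$-clause. Every other step reduces cleanly to the BFS structure together with the in-degree invariant inherited from Lemma~\ref{lemma:oracle}.
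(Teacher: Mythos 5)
Your proposal relies only on the in-degree cap $\din_{H'}(w)\le \lfloor d'/5\rfloor$ inherited from \textsc{Edge-Oracle}, but never invokes the expander property of $G$ itself, and this is a genuine gap in two places.

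First, the per-level growth rate you derive, $|V_k|\cdot\lfloor d'/5\rfloor \ge |V_{k-1}|\cdot \lfloor d'/4\rfloor$, only yields a multiplicative factor of roughly $5/4$, so the depth bound becomes $\log_{5/4} n \approx 3.1\log_2 n$. You say the base is absorbed into the $O(\cdot)$, but the claim asserts distance at most $\log n$, not $O(\log n)$. The paper gets a factor-of-$3$-per-level growth by instead arguing via the expander condition: if $|S_i| < 2|S_{<i}|$ then $S_{<i}\cup S_i$ is a set of size $<3|S_{<i}|\le \beta n$ containing at least $|S_{<i}|d'/5 = |S_{<i}|d/50$ edges of $G$, contradicting $e_G(S_{<i}\cup S_i)\le 6\gamma|S_{<i}|d$ for $\gamma < 1/300$. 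Pure in-degree counting cannot give a factor better than $\lfloor d'/4\rfloor/\lfloor d'/5\rfloor$, because the oracle is free (as far as your analysis knows) to saturate in-degrees in a small neighbourhood.

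Second, and more seriously, the "short constant-chasing computation" to rule out the edge-cap exit fails with your ingredients. The in-degree cap gives $e(H') \le |V'|\cdot\lfloor d'/5\rfloor \le |V'|d/50$, so $|V'|<\beta n/5$ only guarantees $e(H') < \beta nd/250$. But the cap $cnd/2 = \beta nd/2400$ is an order of magnitude smaller, so the edge cap could in principle be reached long before $|V'|$ reaches $\beta n/5$. The exponential growth you appeal to does not help here, because it bounds depth, not the relationship between $|V'|$ and $e(H')$. The paper instead observes that since all edges of $H'$ lie inside $V'$ and $|V'|\le\beta n/5$, the expander condition forces $e(H')\le e_G(V')\le 2\gamma d|V'| < \frac{2\gamma\beta}{5}nd$, which for $\gamma < 1/1000$ is strictly below $cnd/2=\frac{\beta}{2400}nd$ -- a contradiction. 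So the expander bound on $e_G$ over small sets is indispensable for this case too. Your queue-emptiness argument and the observation $H'\subseteq H$ are both correct, but the two points above are real holes that the expander property is needed to fill.
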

\begin{proof}%novi dokaz
    Let $T$ be the tree rooted at $v$ consisting of edges $(u,w)$ added to $H'$, such that $w$ was not in $V'$ when the edge $(u,w)$ was returned by the oracle. Let $\ell$ be the depth of $T$, and note that vertices in  \texttt{q} are all contained in levels $\ell-1$ and $\ell$. Importantly, any vertex in $V'$ which is not in \texttt{q} has exactly $\lfloor d'/4 \rfloor > d'/5$ out-edges in $H'$. Denote by $S_i$ the $i$-th level, and let us show that $S_{\ell-1}$ is at distance at most $\log n-1$ from the root $v$, which will give that every vertex in $V'=V(T)$ is at distance at most $\log n$ from $v$. Let $S_{<i}$ be the union of the vertices in the first $i-1$ levels. It suffices to show that $|S_i|\geq 2|S_{<i}|$ for all $i<\ell$.
    
    Suppose that is not the case for some $i$, that is, $|S_i| < 2|S_{<i}|$. From $\textrm{Out}_{H'}(S_{<i}) \subseteq S_{<i} \cup S_i$ we get
    \begin{equation} \label{eq:e_Si}
        e_{G}(S_{<i} \cup S_i) \ge e_{H'}(S_{<i} \cup S_i) \ge \dout_{H'}(S_{<i}) \ge |S_{<i}| d'/5. 
    \end{equation}
    As $|S_{<i} \cup S_i| < 3|S_{<i}|$ and $|V'| \le \beta n$, by the assumption that $G$ is an expander we have
    $$
        e_G(S_{<i} \cup S_i) \le \gamma \cdot 3 |S_{<i}| \cdot 2 d,
    $$
    which contradicts \eqref{eq:e_Si} as $\gamma<1/300$ and $d'=d/10$.
    
    To finish, we have to show that the algorithm terminates when $|V'|\geq \beta n/5$. If that was not the case, then either $H'$ contans at least $cnd/2$ edges, or \texttt{q} is empty. If the former is true, then $H'\subseteq G$ is a digraph whose vertex set $V'$ is of size at most $|V'|\leq \beta n/5$ and which has at least 
    $$e(H')\geq cnd/2=\frac{\beta}{2400}nd > \gamma|V'| \cdot 2d$$
    edges,
    again contradicting the assumption that $G$ is an expander. Finally, it is easy to see that \texttt{q} cannot be empty, as then again $H'$ is a graph on at most $\beta n/5$ vertices with every vertex of degree at least $d'/5$ -- thus a contradiction as in the previous case.
\end{proof}

\begin{claim}\label{cl:G3}
    Let $V_a, V_b \subseteq V$ be sets of size at least $\beta n$. Then for any subset of edges $R \subseteq G_3$ of size $|R| \le \beta n d / 50$, $G_3 \setminus R$ contains a path $P$ from $V_a$ to $V_b$ of length at most $300 / \beta$.
\end{claim}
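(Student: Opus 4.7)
The plan is to run a forward BFS from $V_a$ in $G_3\setminus R$ producing sets $S_0\subseteq S_1\subseteq\cdots$, and simultaneously a backward BFS from $V_b$ in $G_3\setminus R$ producing sets $T_0\subseteq T_1\subseteq\cdots$, and to show that after $O(1/\beta)$ layers both fronts exceed $n/2$ in size. Once that happens, $S_k\cap T_k\neq\emptyset$, and concatenating a forward path from $V_a$ to the meeting vertex with a forward path from it to $V_b$ yields the required short $V_a$--$V_b$ path in $G_3\setminus R$.

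The crucial one-step expansion estimate is: for every $S\subseteq V$ with $\beta n\le |S|\le n/2$, $|\textrm{Out}_{G_3\setminus R}(S)|\ge \beta n/10$. To prove this, first note that $G_3$ is $(d-2d')$-regular with $d'=\lfloor d/10\rfloor$, hence of out-regularity at least $4d/5$. The expander hypothesis on $G$ (stated via its $2d$-regular underlying multigraph $G_\emptyset$) gives $e_{G_3}(S)\le e_G(S)\le 2d|S|/3$, so the directed edge boundary satisfies
$$
\dout_{G_3}(S,\overline{S})\;=\;|S|(d-2d')-e_{G_3}(S)\;\ge\;\tfrac{4d|S|}{5}-\tfrac{2d|S|}{3}\;=\;\tfrac{2d|S|}{15}.
$$
Subtracting $|R|\le \beta nd/50$ and using $|S|\ge \beta n$ leaves at least $(2/15-1/50)\beta nd = 17\beta nd/150$ boundary edges in $G_3\setminus R$. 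Dividing by the maximum in-degree, which is at most $d$, produces at least $17\beta n/150\ge \beta n/10$ distinct out-neighbours of $S$ outside $S$ in $G_3\setminus R$.

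Iterating, as long as $\beta n\le |S_i|\le n/2$ we gain at least $\beta n/10$ new vertices per BFS layer, so after $\lceil 5/\beta\rceil$ rounds $|S_k|>n/2$. The same estimate applies verbatim to the backward BFS because $\overleftarrow{G_3\setminus R}$ has the same underlying undirected edge set and the same in- and out-regularity, so the identity $\din_{G_3}(T,\overline{T})=|T|(d-2d')-e_{G_3}(T)$ and the subsequent chain of inequalities hold with $\din$ in place of $\dout$; hence $|T_k|>n/2$ after $\lceil 5/\beta\rceil$ rounds as well. Two sets of size strictly greater than $n/2$ in a universe of $n$ vertices must intersect, so concatenating the two BFS paths gives a directed $V_a$--$V_b$ path in $G_3\setminus R$ of length at most $10/\beta \le 300/\beta$.

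The only real obstacle is the bookkeeping that translates the expander inequality (stated for the $2d$-regular undirected $G_\emptyset$) into a bound on the directed edge boundary of $G_3\setminus R$, together with the check that the constants $2/15$ (from $4/5-2/3$) and $1/50$ (from the bound on $|R|$) really do leave a positive remainder proportional to $\beta n d$. Once the vertex expansion of $\beta n/10$ is in hand, the iteration and the symmetry between forward and backward BFS are immediate, and the factor-of-$30$ slack between $10/\beta$ and the stated $300/\beta$ provides ample room in case one chooses to weaken any of the intermediate constants.
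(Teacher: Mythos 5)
Your proof is correct and takes essentially the same route as the paper: bound the directed out-edge boundary of an intermediate BFS layer in $G_3\setminus R$ by combining the regularity of $G_3$, the expander bound on $e_G(S)$, and the cap on $|R|$; convert this to vertex expansion; run forward BFS from $V_a$ and backward BFS from $V_b$ until both fronts exceed $n/2$; then intersect. The only cosmetic difference is that you anchor the estimate on $|S|\ge\beta n$ (matching the claim as stated) and get a per-layer gain of $\beta n/10$, whereas the paper's own proof quietly works under the weaker hypothesis $|S|\ge\beta n/5$ (which is what it actually has when the claim is invoked with the output of \textsc{Oracle-BFS}), obtaining a gain of $\beta n/150$; under that weaker hypothesis your computation still closes, just with smaller constants, so the slack you left down to $300/\beta$ is exactly the right instinct.
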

\begin{proof}
    Let $G' = G_3 \setminus R$. Note that for every $S \subseteq V(G)$ of size $\beta n/5 \le |S| \le n/2$ we have
    $$
        \dout_{G'}(S, V \setminus S) \ge (d - 2d')|S| - e_G(S) - |R| \ge 4 d |S| / 5 - 2 d |S|/3  - |S| d / 10 \geq |S| d / 30,
    $$
    thus
    $$
        |\textrm{Out}_{G'}(S)| \ge |S|/30 \geq  \beta n /150.
    $$
    Therefore, there are more than $n/2$ vertices which can be reached from $V_a$ within $150 / \beta$ steps in $G'$. Note that this estimate can be improved, but for us this simple bound suffices.
    
    The same argument shows that there are more than $n/2$ vertices from which a vertex in $V_b$ can be reached within $150/\beta$ steps. Therefore, there exists a vertex which can be both reached from $V_a$ in $150/\beta$ steps, and which can reach $V_b$ in $150/\beta$ steps, implying a desired path from $V_a$ to $V_b$ exists.
\end{proof}

With Claim~\ref{cl:BFS size} and Claim~\ref{cl:G3} at hand, we clearly see that in Step 3 of the main algorithm, the found paths $P_a,P_b$ are always of length at most $\log n$, while $P'$ is of length at most $300/\beta$ (and is disjoint of $H_3$ as we let $R=H_3$, and inductively we have $|R|=|H_3|\leq 300|\mathcal P|/\beta\leq \beta nd/50$).
Since we only add the edges of those paths to $H_1,H_2$ and $H_3$ respectively, \ref{mainp:size} follows. 

To complete the proof, we show that line 7 of Algorithm~\ref{alg:oracle-bfs} is valid. Indeed, before each execution of Algorithm~\ref{alg:oracle-bfs}, by \ref{mainp:size} we have $|H_1|,|H_2|\leq c nd/2$, hence by the definition of the oracle, we can make at least $c nd/2$ new requests of the form \textsc{add-edge}(v), as long as $\dout_{H_i}(v)\leq d'/2$. By \ref{mainp:out} and \ref{mainp:in}, we further know that before we call Algorithm~\ref{alg:oracle-bfs}, we always have $\dout_{H_1}(v) \le \din_{H_1}(v) + \mathcal{P}_s(v)\leq d'/5+d'/20\leq d'/4$. Therefore, since by the description of the algorithm we request a vertex $v$ at most $\lceil d'/4\rceil$ times, we have that during the execution of the algorithm we always have $\dout_{H_i}(v)\leq d'/2$, which finishes the proof.

% \RN{Dati skicu dokaza za $(n, d, \lambda)$-grafove. Samo treba reci da je $(n, d, \lambda)$-graf ujedno i $(\beta, \gamma)$-expander, i da  je jedino sto se menja bolja analiza u Claim C.}

\subsection{Sketch of proof for Theorem~\ref{thm:ndL}}
Recall that by the Expander mixing lemma, in every $(n,d,\lambda)$-graph $G$ it holds that every set of vertices of size $\alpha n$ has average degree at most $\alpha d+\lambda$ (see Section 9.2 in \cite{alon2016probabilistic}). Hence $G$ is a $(\beta, \gamma)$-expander for $\beta=1/10^4$ and $\gamma= 1/5000$, for say $\lambda < d/10^4$.

The proof of Theorem~\ref{thm:ndL} is essentially the same as the proof of Theorem~\ref{thm:main} presented in this section, up to Claim~\ref{cl:BFS size}, where we can use the stronger expansion properties of $(n,d,\lambda)$-graphs to show that every vertex in $V'$ is in fact at distance at most $O(\frac{\log n}{\log(d/\lambda)})$ from $v$. Indeed, even after deleting a  constant fraction of edges at each vertex in an $(n,d,\lambda)$-graph, each set up to a certain linear size expands by a factor of $\Theta(\frac{d^2}{\lambda^2})$ (which again follows from the Expander mixing lemma, see Lemma 3.6 in \cite{draganic2022rolling}). Consequently, we have $|S_i|\geq \Theta(\frac{d^2}{\lambda^2})|S_{<i}|$, which evidently gives the required distance between $v$ and every vertex in $V'$.

\newpage

\bibliographystyle{abbrv}

\end{document}